\journalname{Statistics and Computing}
\newcommand{\field}[1]{\mathbb{#1}}
\newcommand{\Z}{\field{Z}}
\DeclareMathOperator{\xRe}{Re}
\DeclareMathOperator{\xlcm}{lcm}
\newcommand{\design}{{\mathcal D}}
\newcommand{\fraction}{{\mathcal F}}
\begin{document}

\title{Generalized Minimum Aberration mixed-level orthogonal arrays
}
\subtitle{A general approach based on sequential integer quadratically constrained quadratic programming}



\author{Roberto Fontana}



\institute{R. Fontana \at
              Department of Mathematical Sciences - Politecnico di Torino \\
              Tel.: +39-011-0907504
              \email{roberto.fontana@polito.it}           
}

\date{Received: date / Accepted: date}

\maketitle

\begin{abstract}
Orthogonal Fractional Factorial Designs and in particular Orthogonal Arrays are frequently used in many fields of application, including medicine, engineering and agriculture. 
In this paper we present a methodology and an algorithm to find an orthogonal array, of given size and strength, that satisfies the generalized minimum aberration criterion.  The methodology is based on the joint use of polynomial counting functions, complex coding of levels and algorithms for quadratic optimization and puts no restriction on the number of levels of each factor. 

\keywords{Design of experiments \and Generalized minimum aberration criterion \and Generalized wordlength pattern \and Counting function \and integer quadratically constrained quadratic programming}
\end{abstract}

\section{Introduction}
\label{intro}
In this paper we present a methodology to find one of the \emph{best} orthogonal arrays for the generalized minimum aberration (GMA) criterion, as defined in \cite{cheng2004geometric}. We refer to these designs as GMA-optimal designs. For an $m$-factor design, the GMA-criterion is to sequentially minimize the severity of aliasing between all the $i$-factor effects and the overall mean, starting from $i=1$ (main effects) and finishing at $i=m$ ($m$-factor interaction effects).

The joint use of polynomial indicator functions and complex coding of levels provides a general theory for mixed level orthogonal fractional factorial designs, see \cite{pistone|rogantin:2008-JSPI-3}. It also makes use of commutative algebra, see \cite{pistone|wynn:96}, and generalizes the approach to two-level designs as discussed in \cite{fontana|pistone|rogantin:2000}. 
This theory does not put any restriction either on the number of levels of each factor or on the orthogonality constraints. 
It follows that our methodology can be applied to find any GMA-optimal mixed-level orthogonal array.

Orthogonal Arrays (OAs) are frequently used in many fields of application, including medicine, engineering and agriculture. They offer a valuable tool for dealing with problems where there are many factors involved and each run is expensive. They also keep  the statistical analysis of the data quite simple. The literature on the subject is extremely rich. A non-exhaustive list of references, mainly related to the theory of the design of experiments, includes the fundamental paper of \cite{bose:47} and the following books: \cite{raktoe|hedayat|federer:81}, \cite{collombier:96}, \cite{dey|mukerjee:99}, \cite{wu|hamada:2000}, \cite{mukerjee|wu:2006} and  \cite{bailey:08}.

Orthogonal Arrays represent an important class of Orthogonal Fractional Factorial Designs (OFFDs), see, for example, \cite{hedayat|sloane|stufken:1999} and \cite{schoen|eendebak|guyen:2010}. Indeed an Orthogonal Array of appropriate strength can be used to solve the wide range of problems related to the quantification of both the size of the main effects and the interactions up to a given order of interest. 

This paper is organized as follows: in Section~\ref{sec:alg} we briefly review the algebraic theory of OFFDs based on polynomial counting functions. The computation of the wordlength pattern of a design is described in Section~\ref{sec:ab_crit} while we describe the algorithm in Section~\ref{sec:algo}. Some applications of the algorithm are presented in Section~\ref{sec:testcases}. Finally, concluding remarks are in Section~\ref{sec:con}. Section~\ref{sec:alg} is closely based on Section 2 of \cite{fontanacost}. We include it here to facilitate the reader.

\section{Algebraic characterization of OFFDs} \label{sec:alg}
In this Section, for ease in reference, we present some relevant results of the algebraic theory of OFFDs.  The interested reader can find further information, including the proofs of the propositions, in \cite{fontana|pistone|rogantin:2000} and \cite{pistone|rogantin:2008-JSPI-3}.

\subsection{Fractions of a full factorial design} \label{sec:fr_ff}
Let us consider an experiment which includes $m$ factors $\design_{j}, \; j=1,\ldots,m$. Let us code the $n_j$ levels of the factor $\design_{j}$ by  the $n_j$-th roots of the unity
\[
\design_{j} = \{\omega_0^{(n_j)},\ldots,\omega_{n_j-1}^{(n_j)}\},
\]
where $\omega_k^{(n_j)}=\exp\left(\sqrt{-1}\:  \frac {2\pi}{n_j} \ k\right)$, $k=0,\ldots,n_j-1, \ j=1,\ldots,m$.

The \emph{full factorial design} with \emph{complex coding} is $\design = \design_1 \times \cdots \design_j \cdots \times \design_m$. We denote its cardinality by $\# \design$, $\# \design=\prod_{j=1}^m n_j$.

\begin{definition}
A fraction $\fraction$ is a multiset $(\fraction_*,f_*)$ whose underlying set of elements $\fraction_*$ is contained in $\design$ and $f_*$ is the multiplicity function $f_*: \fraction_* \rightarrow \mathbb N$ that for each element in $\fraction_*$ gives the number of times it belongs to the multiset $\fraction$. 
\end{definition}
We recall that the underlying set of elements $\fraction_*$ is the subset of $\design$ that contains all the elements of $\design$ that appear in $\fraction$ at least once. We denote the number of elements of a fraction $\fraction$ by $\# \fraction$, with  $\# \fraction= \sum_{\zeta \in \fraction_*} f_*(\zeta)$.

\begin{example}
Let us consider $m=1$, $n_1=3$. We get 
\[\design=\{ 1,\exp\left(\sqrt{-1}\:  \frac{2\pi}{3} \right), \exp\left(\sqrt{-1}\:  \frac{4\pi}{3} \right)\}.
\]
The fraction $\fraction=\{1,1,\exp\left(\sqrt{-1}\:  \frac {2\pi}{3} \right)\}$ is the multiset $(\fraction_*,f_*)$ where 
$\fraction_*=\{1,\exp\left(\sqrt{-1}\:  \frac {2\pi}{3} \right)\}$, $f_*(1)=2$, and 
$f_* (\exp\left(\sqrt{-1}\:  \frac{2\pi}{3} \right)) =1$. We get $\# \fraction=f_*(1)+f_* (\exp\left(\sqrt{-1}\:  \frac{2\pi}{3} \right))=2+1=3$.
\end{example}

In order to use polynomials to represent all the functions defined over $\design$, including multiplicity functions, we define
\begin{itemize} 
\item $X_j$, the $j$-th component function, which maps a point $\zeta=(\zeta_1,\ldots,\zeta_m)$ of $\design$ to its $j$-th component,
\[
X_j \colon \design \ni (\zeta_1,\ldots,\zeta_m)\ \longmapsto \ \zeta_j \in \design_j \ .
\]
The function $X_j$ is called \emph{simple term} or, by abuse of terminology, \emph{factor}.
\item $X^\alpha=X_1^{\alpha_1} \cdot \ldots \cdot X_m^{\alpha_m}$, $\alpha \in L =  \Z_{n_1} \times \cdots \times  \Z_{n_m}$ i.e., the monomial function
\[
X^\alpha : \design \ni (\zeta_1,\ldots,\zeta_m)\ \mapsto \ \zeta_1^{\alpha_1}\cdot \ldots \cdot \zeta_m^{\alpha_m} \ .
\] 
The function $X^\alpha$ is called \emph{interaction term}
\end{itemize}

We observe that $\{X^\alpha: \alpha \in L =  \Z_{n_1} \times \cdots \times  \Z_{n_m}\}$ is a basis of all the complex functions defined over $\design$. We use this basis to represent the counting function of a fraction according to Definition \ref{de:indicator}.

\begin{definition} \label{de:indicator}

The \emph{counting function} $R$ of a fraction $\fraction$ is a complex polynomial defined over $\design$ so that for each $\zeta \in \design$, $R(\zeta)$ equals the number of appearances of $\zeta$ in the fraction. A $0-1$ valued counting function is called an \emph{indicator function} of a single replicate fraction $\fraction$.
We denote by $c_\alpha$ the coefficients of the representation of $R$  on $\design$ using the monomial basis
$\{X^\alpha, \ \alpha \in L\}$:
$$
R(\zeta) = \sum_{\alpha \in L} c_\alpha X^\alpha(\zeta), \;\zeta\in\design, \;  c_\alpha \in \mathbb C \ .
$$
\end{definition}

%

With Proposition \ref{pr:bc-alpha} from \cite{pistone|rogantin:2008-JSPI-3}, we link the orthogonality of two interaction terms with the coefficients of the polynomial representation of the counting function. We denote by $\overline{x}$ the complex conjugate of the complex number $x$.
   
\begin{proposition} \label{pr:bc-alpha}
If $\fraction$ is a fraction of a full factorial design $\design$, $R = \sum_{\alpha \in L} c_\alpha X^\alpha$ is its counting function and $[\alpha-\beta]$ is the $m$-tuple made by the componentwise difference in the rings $\Z_{n_j}$,  
 $\left(\left[\alpha_1-\beta_1 \right]_{n_1}, \ldots, \left[\alpha_m - \beta_m\right]_{n_m} \right)$, then
\begin{enumerate}
 \item \label{it:balpha}
the coefficients $c_\alpha$ are given by $c_\alpha= \frac 1 {\#\design} \sum_{\zeta \in \fraction} \overline{X^\alpha(\zeta)} \ ;$
 \item \label{it:cent1}
 the term $X^\alpha$ is centered on $\fraction$ i.e., $\frac{1}{\#\fraction} \sum_{\zeta \in \fraction} X^\alpha(\zeta)=0$ if, and only if,
 $c_\alpha=c_{[-\alpha]}=0 $;
 \item \label{it:cent2}
 the terms $X^\alpha$ and $X^\beta$
are orthogonal on $\fraction$ if and only if,
$c_{[\alpha-\beta]}=0 $.
\end{enumerate}
\end{proposition}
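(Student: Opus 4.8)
The plan is to use the fact that the monomials $\{X^\alpha : \alpha \in L\}$ form not merely a basis but an \emph{orthogonal} basis of the complex functions on $\design$ for the Hermitian pairing $\langle g,h\rangle_\design = \sum_{\zeta\in\design} g(\zeta)\,\overline{h(\zeta)}$. First I would record two elementary identities, valid for every $\zeta=(\zeta_1,\ldots,\zeta_m)\in\design$ because each $\zeta_j$ is an $n_j$-th root of unity (so $|\zeta_j|=1$ and $\zeta_j^{n_j}=1$): namely $\overline{X^\alpha(\zeta)} = X^{[-\alpha]}(\zeta)$ and $X^\alpha(\zeta)\,\overline{X^\beta(\zeta)} = X^{[\alpha-\beta]}(\zeta)$. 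Together with the factorized sum $\sum_{\zeta\in\design} X^\gamma(\zeta) = \prod_{j=1}^m \sum_{k=0}^{n_j-1}\bigl(\omega_k^{(n_j)}\bigr)^{\gamma_j}$, whose inner factors are finite geometric series with ratio an $n_j$-th root of unity, so that the product equals $\#\design$ when $[\gamma]$ is the zero $m$-tuple of $L$ and vanishes otherwise, this yields the orthogonality relations $\langle X^\alpha, X^\beta\rangle_\design = \#\design$ if $[\alpha-\beta]=0$ and $0$ otherwise.

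For item~\ref{it:balpha}, the point is that the counting function is defined precisely so that $\sum_{\zeta\in\fraction} g(\zeta) = \sum_{\zeta\in\design} R(\zeta)\,g(\zeta)$ for any function $g$ on $\design$, since $R(\zeta)=f_*(\zeta)$ on $\fraction_*$ and $R(\zeta)=0$ elsewhere. Taking $g=\overline{X^\alpha}$, substituting $R=\sum_{\beta\in L} c_\beta X^\beta$, and using the orthogonality relations makes the right-hand side collapse to $\sum_{\beta\in L} c_\beta\,\langle X^\beta, X^\alpha\rangle_\design = \#\design\, c_\alpha$, which is the asserted formula for $c_\alpha$.

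Items~\ref{it:cent1} and~\ref{it:cent2} then follow by index bookkeeping. From item~\ref{it:balpha} and the identity $\overline{X^\alpha(\zeta)}=X^{[-\alpha]}(\zeta)$ one gets $\sum_{\zeta\in\fraction} X^\alpha(\zeta) = \#\design\, c_{[-\alpha]}$; conjugating this relation (and using $[-[-\alpha]]=\alpha$) gives $c_\alpha = \overline{c_{[-\alpha]}}$, so the three conditions ``$X^\alpha$ centered on $\fraction$'', ``$c_{[-\alpha]}=0$'' and ``$c_\alpha=0$'' are equivalent, which is item~\ref{it:cent1}. For item~\ref{it:cent2}, the identity $X^\alpha\,\overline{X^\beta}=X^{[\alpha-\beta]}$ on $\fraction$ shows that $X^\alpha$ and $X^\beta$ are orthogonal on $\fraction$ exactly when the single term $X^{[\alpha-\beta]}$ is centered on $\fraction$, so item~\ref{it:cent1} applied with $\alpha$ replaced by $[\alpha-\beta]$ finishes the argument.

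I do not expect a genuine obstacle: the proof is essentially the orthogonality of the character basis of a finite abelian group plus linear algebra. The only delicate point is purely clerical --- keeping the exponent arithmetic in the rings $\Z_{n_j}$ straight (in particular which monomials must be reduced modulo $n_j$, and checking that complex conjugation consistently swaps the indices $\alpha$ and $[-\alpha]$), together with the standard fact that a sum of all $n_j$-th roots of unity raised to a power nonzero modulo $n_j$ equals zero, on which the orthogonality of the basis rests.
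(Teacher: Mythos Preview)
Your argument is correct and is exactly the standard character-theoretic proof: orthogonality of the monomials $X^\gamma$ on $\design$ (characters of the finite abelian group $\Z_{n_1}\times\cdots\times\Z_{n_m}$), the identity $\sum_{\zeta\in\fraction}g(\zeta)=\sum_{\zeta\in\design}R(\zeta)g(\zeta)$, and the index identities $\overline{X^\alpha}=X^{[-\alpha]}$, $X^\alpha\overline{X^\beta}=X^{[\alpha-\beta]}$.

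Note, however, that the paper does not actually give its own proof of this proposition: it is quoted from \cite{pistone|rogantin:2008-JSPI-3}, and the surrounding section explicitly refers the reader to \cite{fontana|pistone|rogantin:2000} and \cite{pistone|rogantin:2008-JSPI-3} for the proofs. The argument in those references is the same orthogonal-basis computation you carry out, so there is no substantive difference in approach to report.
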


We now define projectivity and, in particular, its relation with orthogonal arrays. Given $I=\{i_1,\ldots,i_k\} \subset \{1,\ldots,m\}, i_1<\ldots < i_k$ and $\zeta=(\zeta_1,\ldots,\zeta_m) \in \design$ we define the projection $\pi_I(\zeta)$ as
\[
\pi_I(\zeta)=\zeta_I \equiv (\zeta_{i_1},\ldots,\zeta_{i_k}) \in \design_{i_1} \times \ldots \times \design_{i_k} \ .
\]

\begin{definition}
A fraction $\fraction$ {\em factorially projects} onto the $I$-factors, $I=\{i_1,\ldots,i_k\} \subset \{1,\ldots,m\}$, $i_1<\ldots < i_k$, if the projection $\pi_I(\fraction)$ is 
a multiple full factorial design, i.e., the multiset $(\design_{i_1} \times \ldots \times \design_{i_k} , f_*)$ where the multiplicity function $f_*$ is constantly equal to a positive integer over $\design_{i_1} \times \ldots \times \design_{i_k}$.
\end{definition}

\begin{example}
Let us consider $m=2, n_1=n_2=2$ and the fraction
$\fraction=\{ (-1,1), (-1,1),$ 
\newline
$(1,-1), (1,1)\}$.
We obtain $\pi_1(\fraction)=\{-1,-1,1,1\}$ and $\pi_2(\fraction)=\{-1,1,1,1\}$. It follows that $\fraction$ projects on the first factor and does not project on the second factor. 
\end{example}

\begin{definition}
A fraction $\fraction$ is a {\em mixed orthogonal
array} of strength $t$ if it factorially projects onto any $I$-factors with $\#I=t$.
\end{definition}

  
\begin{proposition} \label{pr:projectivity}
A fraction \emph{factorially projects onto the $I$-factors}, \\ 
$I=\{i_1,\ldots,i_k\} \subset \{1,\ldots,m\}, i_1<\ldots < i_k$, if and only if,
all the coefficients of the counting function involving the $I$-factors only are $0$.
\end{proposition}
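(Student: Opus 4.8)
The plan is to compute the counting function of the projected multiset $\pi_I(\fraction)$ explicitly and to read off from it exactly when it is constant. Write $J=\{1,\dots,m\}\setminus I$ and split every point $\zeta\in\design$ as $\zeta=(\zeta_I,\zeta_J)$ with $\zeta_I\in\design_I:=\design_{i_1}\times\cdots\times\design_{i_k}$ and $\zeta_J\in\design_J$. First I would observe that, for $\zeta_I\in\design_I$, the multiplicity of $\zeta_I$ in $\pi_I(\fraction)$ equals $R_I(\zeta_I):=\sum_{\zeta_J\in\design_J}R(\zeta_I,\zeta_J)$; thus $R_I$ is the counting function of $\pi_I(\fraction)$ on $\design_I$, and by definition $\fraction$ factorially projects onto the $I$-factors if and only if $R_I$ is constant on $\design_I$ (with a positive integer value).

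Next I would substitute $R=\sum_{\alpha\in L}c_\alpha X^\alpha$, split each multi-index as $\alpha=(\alpha_I,\alpha_J)$, and use the factorization $X^\alpha(\zeta_I,\zeta_J)=X^{\alpha_I}(\zeta_I)\,X^{\alpha_J}(\zeta_J)$, to obtain
\[
R_I(\zeta_I)=\sum_{\alpha_I}X^{\alpha_I}(\zeta_I)\sum_{\alpha_J}c_{(\alpha_I,\alpha_J)}\Big(\sum_{\zeta_J\in\design_J}X^{\alpha_J}(\zeta_J)\Big).
\]
The inner sum over the full factorial $\design_J$ factorizes as $\prod_{j\in J}\sum_{h=0}^{n_j-1}(\omega_h^{(n_j)})^{\alpha_j}$, and each factor equals $n_j$ if $\alpha_j=0$ in $\Z_{n_j}$ and $0$ otherwise; hence the whole inner sum equals $\#\design_J:=\prod_{j\in J}n_j$ when $\alpha_J=0$ and vanishes otherwise. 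Therefore $R_I(\zeta_I)=\#\design_J\sum_{\alpha_I}c_{(\alpha_I,0)}X^{\alpha_I}(\zeta_I)$: up to the positive factor $\#\design_J$, the counting function of $\pi_I(\fraction)$ is precisely the ``$I$-only part'' of $R$, its coefficients being $\#\design_J\,c_{(\alpha_I,0)}$.

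To finish, I would invoke that $\{X^{\alpha_I}:\alpha_I\in\Z_{n_{i_1}}\times\cdots\times\Z_{n_{i_k}}\}$ is a basis of the complex functions on $\design_I$, so the expansion above is unique and $R_I$ is constant if and only if $c_{(\alpha_I,0)}=0$ for every $\alpha_I\neq 0$ --- that is, if and only if all the coefficients of $R$ involving the $I$-factors only vanish. When this holds, the constant value of $R_I$ is $\#\design_J\,c_0=\#\fraction/\#\design_I$ by item~\ref{it:balpha} of Proposition~\ref{pr:bc-alpha}, which is a positive integer because $R_I$ is nonnegative integer-valued and $\fraction\neq\emptyset$; by definition this is exactly the assertion that $\pi_I(\fraction)$ is a multiple full factorial design.

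There is no deep obstacle here; the only steps requiring care are the bookkeeping of the index splitting $\alpha\leftrightarrow(\alpha_I,\alpha_J)$ together with the roots-of-unity summation lemma, and, in the ``only if'' direction, checking that the constant multiplicity produced in the last step is a genuine positive integer rather than merely a positive rational. (Alternatively, one can bypass the explicit summation: $\pi_I(\fraction)$ is a multiple full factorial if and only if $\sum_{\zeta\in\fraction}X^{\alpha_I}(\pi_I(\zeta))=0$ for every $\alpha_I\neq 0$, and $\sum_{\zeta\in\fraction}X^{\alpha_I}(\pi_I(\zeta))=\sum_{\zeta\in\fraction}X^{\bar\alpha}(\zeta)=\#\design\,c_{[-\bar\alpha]}$ by item~\ref{it:balpha} of Proposition~\ref{pr:bc-alpha}, where $\bar\alpha\in L$ is $\alpha_I$ extended by zeros; as $\alpha_I$ ranges over the nonzero indices, $[-\bar\alpha]$ ranges over all nonzero multi-indices supported on $I$, giving the same conclusion.)
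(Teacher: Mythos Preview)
Your argument is correct. The paper does not actually prove Proposition~\ref{pr:projectivity}: Section~\ref{sec:alg} explicitly defers the proofs of these background propositions to \cite{fontana|pistone|rogantin:2000} and \cite{pistone|rogantin:2008-JSPI-3}, so there is no in-paper proof to compare against. Your computation of the projected counting function $R_I$ via the roots-of-unity summation, followed by the uniqueness of the monomial expansion on $\design_I$, is exactly the standard route and matches the approach in the cited sources; the alternative you sketch at the end, going through item~\ref{it:balpha} of Proposition~\ref{pr:bc-alpha} directly, is essentially how \cite{pistone|rogantin:2008-JSPI-3} organize it. The only point worth a word of caution is the one you already flag: in the ``if'' direction you need $\fraction\neq\emptyset$ so that the constant value $\#\fraction/\#\design_I$ is strictly positive, as required by the definition of factorial projection.
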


Proposition \ref{pr:projectivity} can be immediately stated for mixed orthogonal arrays.
\begin{proposition} \label{pr:projectivity_ort}
A fraction is an \emph{orthogonal array of strength $t$}  if and
only if, all the coefficients $c_{\alpha}, \; \alpha \neq (0,\ldots,0)$ of the counting function up to the order $t$ are $0$.
\end{proposition}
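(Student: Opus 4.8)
The plan is to derive Proposition~\ref{pr:projectivity_ort} directly from Proposition~\ref{pr:projectivity} together with the definition of a mixed orthogonal array of strength $t$; once those are in hand, all that remains is a short combinatorial bookkeeping on the supports of the multi-indices $\alpha\in L$. For $\alpha=(\alpha_1,\dots,\alpha_m)\in L$ write $\operatorname{supp}(\alpha)=\{j:\alpha_j\neq 0\}$, and recall that the order of the interaction term $X^\alpha$ is $\#\operatorname{supp}(\alpha)$, so that ``the coefficients of the counting function up to the order $t$'' means the $c_\alpha$ with $\#\operatorname{supp}(\alpha)\le t$, while ``the coefficients involving the $I$-factors only'', for a fixed $I\subseteq\{1,\dots,m\}$, means the $c_\alpha$ with $\emptyset\neq\operatorname{supp}(\alpha)\subseteq I$.

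First I would unfold the definition: $\fraction$ is an orthogonal array of strength $t$ if and only if it factorially projects onto the $I$-factors for every $I\subseteq\{1,\dots,m\}$ with $\#I=t$. Applying Proposition~\ref{pr:projectivity} to each such $I$, this is equivalent to the following: for every $I$ with $\#I=t$ one has $c_\alpha=0$ for all $\alpha$ with $\emptyset\neq\operatorname{supp}(\alpha)\subseteq I$. Requiring all of these conditions simultaneously, $\fraction$ is an OA of strength $t$ if and only if $c_\alpha=0$ for every $\alpha$ lying in the union of index sets $\bigcup_{\#I=t}\{\beta\in L:\emptyset\neq\operatorname{supp}(\beta)\subseteq I\}$.

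The last step is to identify this union with $\{\alpha\in L:1\le\#\operatorname{supp}(\alpha)\le t\}$. The inclusion ``$\subseteq$'' is immediate, since $\operatorname{supp}(\alpha)\subseteq I$ with $\#I=t$ forces $\#\operatorname{supp}(\alpha)\le t$. For ``$\supseteq$'', given $\alpha\neq(0,\dots,0)$ with $\#\operatorname{supp}(\alpha)\le t$, I would pick any $I$ with $\operatorname{supp}(\alpha)\subseteq I$ and $\#I=t$ — possible because $t\le m$ — so that $\alpha$ belongs to the corresponding member of the union. Combining the two inclusions with the equivalence of the previous paragraph gives exactly the claim: $\fraction$ is an OA of strength $t$ if and only if $c_\alpha=0$ for all $\alpha\neq(0,\dots,0)$ with $\#\operatorname{supp}(\alpha)\le t$, i.e.\ all nonzero coefficients of the counting function up to order $t$ vanish.

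I do not expect a genuine obstacle: once Proposition~\ref{pr:projectivity} is granted, the argument is pure bookkeeping. The only points needing a little care are (i) keeping the constant coefficient $c_{(0,\dots,0)}=\#\fraction/\#\design\neq 0$ out of the picture by restricting to $\alpha\neq(0,\dots,0)$ throughout, so that Proposition~\ref{pr:projectivity} is used only for coefficients genuinely ``involving the $I$-factors''; and (ii) noticing that the union over all $I$ of size $t$ automatically sweeps in every support size $1,2,\dots,t$ (each such $I$ has subsets of every smaller cardinality), which is precisely why ``up to order $t$'', rather than ``exactly order $t$'', is the right condition and why strength $t$ subsumes strength $t'$ for $t'<t$.
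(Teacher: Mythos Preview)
Your proposal is correct and follows exactly the route the paper intends: the paper does not give a separate proof but simply remarks that Proposition~\ref{pr:projectivity} ``can be immediately stated for mixed orthogonal arrays,'' and your argument spells out precisely this immediate restatement, applying Proposition~\ref{pr:projectivity} to every $I$ with $\#I=t$ and observing that the resulting union of index sets is $\{\alpha\in L:1\le\#\operatorname{supp}(\alpha)\le t\}$.
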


\section{Aberration criterion}
\label{sec:ab_crit}

Using the polynomial counting function, \cite{cheng2004geometric} provide the following definition of the generalized wordlength pattern $\alpha_\fraction=(\alpha_1(\fraction), \ldots, \alpha_m(\fraction))$ of a fraction $\fraction$ of the full factorial design $\design$.

\begin{definition} \label{wlp}
The generalized wordlength pattern $\alpha_\fraction=(\alpha_1(\fraction), \ldots, \alpha_m(\fraction))$ of a fraction $\fraction$ of the full factorial design $\design$ is defined as
\[
\alpha_i(\fraction)= \sum_{\|\alpha \|_0 =i} \left( \frac{c_{\alpha}}{c_{0}}  \right)^2 \quad i=1,\ldots,m
\]
\end{definition}
where $\| \alpha \|_0$ is the number of non-null elements of $\alpha$.

According to the algebraic methodology that we have described in Section \ref{sec:alg}, as $c_\alpha$'s are complex numbers, we should simply generalize Definition \ref{wlp} as follows.

\begin{definition} \label{wlp1}
The generalized wordlength pattern $\alpha_\fraction=(\alpha_1(\fraction),\ldots,\alpha_m(\fraction))$ of a fraction $\fraction$ of the full factorial design $\design$ is defined as
\[
\alpha_i(\fraction)= \sum_{\|\alpha \|_0 =i} \left(  \frac{ \|c_{\alpha}\|_2 }{\|c_{0}\|_2} \right)^2  \quad i=1,\ldots,m
\]
\end{definition}
where $\| x \|_2$ is the norm of the complex number $x$.

The generalized minimum aberration criterion is to sequentially minimize $\alpha_i(\fraction)$ for $i=1,\ldots,m$.

In Section \ref{norm} we provide a formula to compute $\alpha_i(\fraction), \; i=1,\ldots,m$, given a fraction $\fraction \subseteq \design$.

\subsection{The wordlength patterm of a fraction} \label{norm}
Given a $\fraction$ of the full factorial design $\design$, let us consider its counting function $R = \sum_{\alpha \in L} c_\alpha X^\alpha$. From item \ref{it:balpha} of Proposition \ref{pr:bc-alpha} the coefficients $c_\alpha$ are given by 
\[
c_\alpha= \frac 1 {\#\design} \sum_{\zeta \in \fraction} \overline{X^\alpha(\zeta)} 
\]
or equivalently
\[
c_\alpha= \frac 1 {\#\design} \sum_{\zeta \in \design} R(\zeta) \overline{X^\alpha(\zeta)} .
\]

The square of the norm of a complex number $x$ can be computed as $ x \overline {x}$. It follows that
\[
\| c_\alpha \|_2^2 =  c_\alpha \overline{ c_\alpha}
\]
To make the notation easier we make the non-restrictive hypothesis that both the runs $\zeta$ of the full factorial design $\design$ and the multi-indexes of $L=\Z_{n_1} \times \cdots \times  \Z_{n_m}$ are considered in lexicographic order. We get
\begin{eqnarray*}
(\#\design) \| c_\alpha \|_2^2 = \sum_{\zeta \in \design} R(\zeta) \overline{X^\alpha(\zeta)} =\\
 = ( \overline {X^\alpha}^T Y )\overline{( \overline {X^\alpha}^T Y )} = Y^T \overline {X^\alpha} {X^\alpha}^{T} Y
\end{eqnarray*}
where $X^\alpha$ is the column vector $\left[ \zeta^\alpha : \zeta \in \design \right]$, $Y$ is the column vector $\left[ R(\zeta) : \zeta \in \design \right]$ and $.^T$ denote the transpose of a vector. 
We refer to $Y$ as the counting vector of a fraction and we denote by $H^{\alpha}=[ h_{ij}: i,j=1,\ldots,\#\design ]$ the matrix $\overline {X^\alpha} {X^\alpha}^{T}$. By construction the matrix $H^{\alpha}$ is Hermitian and positive-definite.

\begin{proposition}
The square of the norm of $c_\alpha$ is
\[
\|c_\alpha\|_2^2= \frac{1}{(\#\design)^2} Y^T  H_R^{\alpha} Y
\]
where $H_R^{\alpha}=[ \xRe(h_{i,j}): i,j=1,\ldots,\#\design]$ and $\xRe(h_{i,j})$ it the real part of the complex number $h_{i,j}$.
\end{proposition}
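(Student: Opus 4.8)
The plan is to start from the identity already established in the excerpt, namely $(\#\design)\,\|c_\alpha\|_2^2 = Y^T \overline{X^\alpha}\,{X^\alpha}^{T} Y = Y^T H^\alpha Y$, and to show that the imaginary part of the quadratic form contributes nothing because $Y$ is a real vector. First I would recall that $Y = \left[R(\zeta):\zeta\in\design\right]$ has real (indeed integer) entries, since $R$ is the counting function of a fraction. Hence $Y^T H^\alpha Y$ is a real number, and it must therefore equal its own real part, $\xRe\!\left(Y^T H^\alpha Y\right)$.

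The key step is to commute the operation "take the real part" with the quadratic form. Writing out $Y^T H^\alpha Y = \sum_{i,j} Y_i\, h_{ij}\, Y_j$ and using that each $Y_i, Y_j$ is real, the real part of the sum is the sum of the real parts of the terms, so $\xRe\!\left(Y^T H^\alpha Y\right) = \sum_{i,j} Y_i\, \xRe(h_{ij})\, Y_j = Y^T H_R^\alpha Y$. Combining the two displayed observations gives $(\#\design)\,\|c_\alpha\|_2^2 = Y^T H_R^\alpha Y$, and dividing through by $\#\design$ — wait, by $(\#\design)^2$ after one is careful about which identity in the excerpt is being used; the excerpt's chain gives $(\#\design)\|c_\alpha\|_2^2$ on the left from $c_\alpha = \frac{1}{\#\design}\sum_{\zeta\in\design} R(\zeta)\overline{X^\alpha(\zeta)}$, so that $(\#\design)^2 c_\alpha\overline{c_\alpha} = Y^T H^\alpha Y$ — yields the claimed $\|c_\alpha\|_2^2 = \frac{1}{(\#\design)^2} Y^T H_R^\alpha Y$.

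I do not expect any serious obstacle here; the only point requiring a line of care is bookkeeping the factor of $\#\design$ versus $(\#\design)^2$ between the excerpt's intermediate display and the proposition's statement, which is settled simply by substituting the formula for $c_\alpha$ directly rather than relying on the abbreviated chain. A secondary remark worth including is that $H_R^\alpha$ is real and symmetric (being the entrywise real part of a Hermitian matrix), so the quadratic form $Y^T H_R^\alpha Y$ is well defined and real for any real $Y$, which is exactly what makes it usable as the objective in the integer quadratically constrained quadratic program of the later sections. No additional machinery beyond Proposition~\ref{pr:bc-alpha}, item~\ref{it:balpha}, and the reality of the counting vector is needed.
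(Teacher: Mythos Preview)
Your argument is correct and reaches the same conclusion as the paper, but the key step is organized a little differently. The paper does not appeal to ``$Y^T H^\alpha Y$ equals a squared norm, hence is real, hence equals its real part''; instead it uses the scalar identity $Y^T H^\alpha Y = Y^T (H^\alpha)^T Y$ together with Hermitianness $(H^\alpha)^T = \overline{H^\alpha}$ to obtain $Y^T H^\alpha Y = Y^T \overline{H^\alpha} Y$, and averaging the two gives $Y^T H^\alpha Y = Y^T H_R^\alpha Y$ directly. Both routes are one-liners; yours leans on the prior identification of the quadratic form with $\|c_\alpha\|_2^2$, while the paper's is a self-contained algebraic manipulation that incidentally proves the quantity is real. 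One wording issue to fix: the inference ``$Y$ has real entries, hence $Y^T H^\alpha Y$ is real'' is not valid on its own (a real vector sandwiching an arbitrary complex matrix need not yield a real scalar); you must invoke either the Hermitianness of $H^\alpha$ or the equality with $(\#\design)^2\|c_\alpha\|_2^2$ to justify that line, so say so explicitly. You are also right about the normalization: the paper's intermediate display carries $(\#\design)$ where $(\#\design)^2$ is intended, and the proposition's statement has the correct power.
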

\begin{proof}
For a quadratic form we have
\[
Y^T  H^{\alpha} Y = Y^T  {(H^{\alpha})}^T Y 
\]
The matrix $H^{\alpha}$ is Hermitian: ${(H^{\alpha})}^T = \overline{H^{\alpha}}$.
It follows that 
\[
(\#\design) \| c_\alpha \|_2^2 = Y^T H_R^{\alpha} Y 
\] 
where $H_R^{\alpha}=[ \xRe(h_{i,j}): i,j=1,\ldots,\#\design]$ and $\xRe(h_{i,j})$ it the real part of the complex number $h_{i,j}$.
\qed
\end{proof}

In this way we can compute the generalized word length pattern using only real valued vectors and matrices. In Proposition \ref{pr:hr} we provide an explicit expression of the elements of the matrix $H_R^{\alpha}$.

\begin{proposition} \label{pr:hr}
The real part of the element $h_{i,j}$ of the matrix $H^\alpha$ is
\[
\cos \left( \frac {2\pi}{n} \sum_{k=1}^{m} \frac{n}{n_k} \alpha_k (t_k-z_k) \right) \quad i,j=1,\ldots,\#\design
\]
where $(z_1,\ldots,z_m)$ (resp. $(t_1,\ldots,t_m)$ )  is the $i$-th (resp. $j$-th) row of $L=\Z_{n_1} \times \cdots \times  \Z_{n_m}$ and $n$ is the lowest common multiple of $n_1,\ldots,n_m$, $n=\xlcm(n_1,\ldots,n_m)$.
\end{proposition}
\begin{proof}
Let $\zeta=(\zeta_1,\ldots,\zeta_m)$ be the $i$-th row of $\design$. We have $\zeta_k= \exp (\sqrt{-1} \frac {2\pi}{n_k} z_k), k=1,\ldots,m$ where $(z_1,\ldots,z_m)$ is the $i$-th row of $L=\Z_{n_1} \times \cdots \times  \Z_{n_m}$. Analogously let $\tau=(\tau_1,\ldots,\tau_m)$ be the $j$-th row of $\design$. We have $\tau_k= \exp (\sqrt{-1} \frac {2\pi}{n_k} t_k), k=1,\ldots,m$ where $(t_1,\ldots,t_m)$ is the $j$-row of $L$. 

The complex conjugate of $\zeta_k$ is $ \overline{\zeta}_k=\exp (- \sqrt{-1} \frac {2\pi}{n_k} z_k)$ $k=1,\ldots,m$.
It follows that $h_{i,j}$ can be written as
\[
\exp ( \sqrt{-1} \frac {2\pi}{n_1} \alpha_1 (t_1-z_1) ) \cdot \ldots \cdot \exp ( \sqrt{-1} \frac {2\pi}{n_m} \alpha_m (t_m-z_m) )
\]
or 
\[
\exp ( \sqrt{-1} \frac {2\pi}{n} ( \frac{n}{n_1} \alpha_1 (t_1-z_1)  + \ldots + \frac {n}{n_m} \alpha_m (t_m-z_m) ) )
\]
where $n$ is the lowest common multiple of $n_1,\ldots,n_m$, $n=\xlcm(n_1,\ldots,n_m)$.
Taking the real part of $h_{i,j}$ we complete the proof.
\qed
\end{proof}
\begin{proposition} \label{pr:wlp}
The generalized wordlength pattern $\alpha_\fraction=(\alpha_1(\fraction),\ldots,\alpha_m(\fraction))$ of a fraction $\fraction$ of the full factorial design $\design$ is
\[
\alpha_i(\fraction)=  \frac{1}{(\#\fraction)^2} Y^T H_i Y \; i=1,\ldots,m
\]
where $H_i = \sum_{\|\alpha\|_{0}=i} H_R^{\alpha}$.
\end{proposition}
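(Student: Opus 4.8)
The plan is to feed the closed form for $\|c_\alpha\|_2^2$ obtained above, together with the explicit value of the normalizing coefficient $c_0$, into Definition~\ref{wlp1}, and then to use linearity of the quadratic form $M\mapsto Y^T M Y$ in its matrix argument to absorb the sum over multi-indices into the single matrix $H_i$.

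First I would identify the denominator appearing in Definition~\ref{wlp1}. By item~\ref{it:balpha} of Proposition~\ref{pr:bc-alpha}, and since $X^0$ is the constant function $1$,
\[
c_0 \;=\; \frac{1}{\#\design}\sum_{\zeta\in\fraction}\overline{X^0(\zeta)} \;=\; \frac{\#\fraction}{\#\design}\;\in\;\mathbb R_{>0},
\]
so that $\|c_0\|_2^2=(\#\fraction)^2/(\#\design)^2$, a real positive number. Next, the earlier Proposition giving $\|c_\alpha\|_2^2=\tfrac{1}{(\#\design)^2}\,Y^T H_R^{\alpha}Y$ for every $\alpha\in L$ is applied. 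Dividing the latter expression by $\|c_0\|_2^2$ the factor $(\#\design)^{-2}$ cancels, leaving
\[
\left(\frac{\|c_\alpha\|_2}{\|c_0\|_2}\right)^2 \;=\; \frac{1}{(\#\fraction)^2}\,Y^T H_R^{\alpha}Y .
\]

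Finally I would sum this identity over all $\alpha$ with $\|\alpha\|_0=i$. Since $Y$ does not depend on $\alpha$ and $M\mapsto Y^T M Y$ is linear in $M$, the summation passes through to the matrices:
\[
\alpha_i(\fraction)=\sum_{\|\alpha\|_0=i}\frac{1}{(\#\fraction)^2}\,Y^T H_R^{\alpha}Y=\frac{1}{(\#\fraction)^2}\,Y^T\Big(\sum_{\|\alpha\|_0=i}H_R^{\alpha}\Big)Y=\frac{1}{(\#\fraction)^2}\,Y^T H_i Y,
\]
which is exactly the assertion with $H_i=\sum_{\|\alpha\|_0=i}H_R^{\alpha}$. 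I do not expect any genuine obstacle here: the proposition is essentially a bookkeeping corollary of results already proved in this section. The only point deserving a line of care is the evaluation of $c_0$ — recognizing that it equals the real ratio $\#\fraction/\#\design$, so that the normalization in Definition~\ref{wlp1} replaces the $(\#\design)^{-2}$ present in $\|c_\alpha\|_2^2$ by $(\#\fraction)^{-2}$; everything else is cancellation and linearity of the quadratic form.
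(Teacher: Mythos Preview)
Your argument is correct and mirrors the paper's own proof essentially step for step: both compute $c_0=\#\fraction/\#\design$ via item~\ref{it:balpha} of Proposition~\ref{pr:bc-alpha}, insert the quadratic-form expression for $\|c_\alpha\|_2^2$, and then use linearity in the matrix argument to collapse the sum over $\|\alpha\|_0=i$ into $H_i$. There is no substantive difference in approach.
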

\begin{proof}
From Definition \ref{wlp1} we have 
\[
\alpha_i(\fraction) = \sum_{\|\alpha \|_0 =i} \left(  \frac{ \|c_{\alpha}\|_2 }{\|c_{0}\|_2} \right)^2 \quad i=1,\ldots,m.
\]
From item \ref{it:balpha} of Proposition \ref{pr:bc-alpha} we get $c_0=\frac{\#\fraction}{\#\design}$ and therefore $\|c_0 \|_2^2=\left(\frac{\#\fraction}{\#\design}\right)^2$. We can also write 
\begin{eqnarray*}
(\#\design)^2\sum_{\|\alpha \|_0 =i} \|c_{\alpha}\|_2^2= (\#\design)^2\sum_{\|\alpha \|_0 =i} Y^T H_R^\alpha Y= \\
= (\#\design)^2  Y^T (\sum_{\|\alpha \|_0 =i}  H_R^\alpha) Y =  (\#\design)^2  Y^T H_i Y
\end{eqnarray*}
where $H_i = \sum_{\|\alpha\|_{0}=i} H_R^{\alpha}$.
\qed
\end{proof}

From a computational point of view (see Section \ref{GMA}) it is useful to consider the Cholesky decomposition of the symmetric and positive definite matrix $H_i$ 
\[
H_i=U_i^T U_i \quad i=1,\ldots,m.
\] 

Thus from Proposition \ref{pr:wlp} the wordlength pattern of a fraction $\fraction \subseteq \design$ can be written as 
\[
\alpha_i(\fraction) = \frac{1}{(\#\fraction)^2} \|U_i Y \|_2^2 \quad i=1,\ldots,m.
\]   

\subsection{GMA for mixed level orthogonal arrays} \label{GMA}
From Proposition \ref{pr:projectivity_ort} we know that for orthogonal arrays of strength $t$ all the coefficients $c_\alpha$ of the counting function up to order $t$ (that is $0 < \| \alpha \|_0 \leq t$) must be $0$.
The generalized wordlength pattern $\alpha_\fraction$ of an orthogonal array $\fraction$ of strength $t$ will be
\[
( 0, \ldots, 0, \alpha_{t+1}(\fraction), \ldots, \alpha_m(\fraction)) .
\] 
It follows that the counting vectors must satisfy the following condition 
\[
U_i Y = 0 \; \quad i=1,\ldots,t 
\]
or, equivalently,
\[
A_t Y = 0
\]
where
\begin{equation*}
A_t= \left[ 
\begin{array}[h]{c}
	U_1 \\
	\ldots \\
	U_t
\end{array}
\right]
\end{equation*}

\cite{fontanacost}, \cite{fontanapistoneadv13} and \cite{fontanasampo} show that, given the full factorial design $\design$, the counting vectors $Y=[R(\zeta): \; \zeta \in \design]$ of the orthogonal arrays $\fraction \subseteq \design$ of strength $t$ are the positive integer solutions of a system of linear equations, $AY=0$. The present paper shows a different way to build the constraint matrix $A$. 

Let us suppose that we are interested in orthogonal arrays of size $N$. The GMA-criterion will provide the OAs with the maximum strength $t$ for the given size $N$. If the norm of $U_1Y$ is strictly positive, $\|U_1 Y\|_2>0$ for all the counting vectors $Y$ it follows that no OA exists for that size $N$.

\section{An algorithm for GMA designs} \label{sec:algo}
A counting vector of a GMA-optimal design can be obtained through the $m$ steps of the algorithm below. 

\subsection{Input}
The input of the algorithm is made of:
\begin{enumerate}
\item the number $m$ of the factors and the number of level $n_i$ of the $i$-th factor, $i=1,\ldots,m$;
\item the size $N$ of the fraction $\fraction$.
\end{enumerate} 
\subsection{Step $1$}
Solve the following quadratic optimization problem
\begin{equation} \label{eq:opt1}
\begin{cases}
\min \|U_{1} Y \|_2^2 \\
\text{subject to } \\
1^T Y= N, \\
Y=[y_i], y_i \in \Z, y_i \geq 0 
\end{cases}
\ .
\end{equation} 
Let us denote by $Y_{1}^{\star}$ one solution and let $W_{1}^{\star}= \| U_{1} Y_{1}^{\star}\|_2^2$.

\subsection{Step $k$ with $k=2,\ldots,m$}
Solve the following quadratic optimization problem
\begin{equation} \label{eq:optk}
\begin{cases}
\min \|U_{k} Y \|_2^2 \\
\text{subject to } \\
1^T Y= N, \\
\| U_{1} Y \|_2^2 = W_{1}^{\star}, \\
\ldots \\
\| U_{k-1} Y \|_2^2 =W_{k-1}^{\star}, \\ 
Y=[y_i], y_i \in \Z, y_i \geq 0 
\end{cases}
\ .
\end{equation} 
Let us denote by $Y_{k}^{\star}$ one solution and let $W_{k}^{\star}=\| U_{k} Y_{k}^{\star}\|_2^2$. We observe that if $W_{j}^{\star}=0, j \in \{1,\ldots,m\}$ then the condition $W_{j}^{\star}=\| U_{j} Y\|_2^2$ can be simply replaced by $U_{j} Y = 0 $.

\subsection{Output}
If $ \|U_1 Y_{m}^{\star}\|_2=\ldots=\|U_t Y_{m}^{\star}\|_2=0$ and  $\|U_{t+1} Y_{m}^{\star}\|_2>0$ the solution $Y_{m}^{\star}$ of the last optimization problem, corresponding to the $m$-th step, is the counting vector of an orthogonal array $\fraction \subseteq \design$ of size $N$ and strength $t$ that is optimal according to the GMA-criterion. The wordlength pattern of $\fraction$ is
\[
(0,\ldots,0, \frac{1}{(\#\fraction)^2}\| W_{t+1}^{\star} \|_2^2, \ldots, \frac{1}{(\#\fraction)^2}\| W_{m}^{\star} \|_2^2 ).
\]
If $ \|U_1 Y_{m}^{\star}\|_2 >0$ then the solution $Y_{m}^{\star}$ is a fraction that that is optimal according to the GMA-criterion but that is not an OA.

\section{Test cases} \label{sec:testcases}
We denote by $OA(N,n_1 \cdot \ldots \cdot n_m, t)$ a mixed level orthogonal array with $N$ rows, $m$ columns (the $i$-th column has $n_i$ levels, $i=1,\ldots,m$) and with strength $t$. 

The computations are made using 
\begin{itemize}
\item one main module, written in SAS IML, that prepares the $m$ optimization problems, \cite{man1};
\item MOSEK that solves each optimization problem, \cite{rmosek}. 
\end{itemize}
The simulation study has been conducted on a standard laptop (CPU Intel Core i7-2620M CPU 2.70 GHz 2.70 GHz, RAM 8 Gb). 

\subsection{Five factors with $2$ levels each}
We consider five factors with $2$ levels each. We initially choose $N=16$. Using the algorithm described in Section~\ref{sec:alg}, we obtain $5$ fractions, that are the optimal solutions corresponding to steps $1,\ldots,5$. The wordlength patterns of these fractions are shown in Table~\ref{tab:wlp16}. The GMA-optimal fraction, that is found at the last iteration, is an $OA(16,2^5,4)$ with a wordlength pattern equal to $(0,0,0,0,1)$.

\begin{table}
\caption{Wordlength patterns of optimal fractions at different steps; $m=5, n_1=\ldots=n_5=2, N=16$}
\label{tab:wlp16}       
\begin{tabular}{c|ccccc}
\hline\noalign{\smallskip}
step & $\alpha_1(\fraction)$ & $\alpha_2(\fraction)$ & $\alpha_3(\fraction)$ & $\alpha_4(\fraction)$ & $\alpha_5(\fraction)$  \\
\noalign{\smallskip}\hline\noalign{\smallskip}
1 & 0 & 10 & 0 & 5 & 0 \\
2 & 0 & 0 & 1 & 0 & 0 \\
3 & 0 & 0 & 0 & 1 & 0 \\
4 & 0 & 0 & 0 & 0 & 1 \\
5 & 0 & 0 & 0 & 0 & 1 \\
\noalign{\smallskip}\hline
\end{tabular}
\end{table}

We now consider $N=6,8,10,12$. The wordlength pattern of the GMA-optimal solutions are presented in Table~\ref{tab:wlp8_12}. 

\begin{table}
\caption{Wordlength patterns of GMA-optimal fractions of different sizes; $m=5, n_1=\ldots=n_5=2, N=16$}
\label{tab:wlp8_12}       
\begin{tabular}{c|c@{\hskip 0.1in}c@{\hskip 0.1in}c@{\hskip 0.1in}c@{\hskip 0.1in}c@{\hskip 0.1in}c}
\hline\noalign{\smallskip}
N & $\alpha_1(\fraction)$ & $\alpha_2(\fraction)$ & $\alpha_3(\fraction)$ & $\alpha_4(\fraction)$ & $\alpha_5(\fraction)$ & Type \\
\noalign{\smallskip}\hline\noalign{\smallskip}
6 & 0 & 1.11 & 1.78 & 1.44 & 0 & $OA(6,2^5,1)$ \\
8 & 0 & 0 & 2 & 1 & 0 & $OA(8,2^5,2)$ \\
10 & 0  &   0.4   &   0  &    1.8   &   0 & $OA(10,2^5,1)$ \\
12 & 0 & 0 & 1.11 & 0.56 & 0 & $OA(12,2^5,2)$ \\
14 & 0 & 0.2 & 0 & 1.08 & 0 & $OA(14,2^5,1)$ \\
\noalign{\smallskip}\hline
\end{tabular}
\end{table}

\subsection{Four factors, one with $2$ levels and three with $3$ levels each}
We consider four factors, the first with $2$ levels and the remaining three with $3$ levels each.
We initially choose $N=18$. We obtain $4$ fractions, that are the optimal solutions obtained at step $1,\ldots,4$. The wordlength patterns of these fractions are reported in Table~\ref{tab:wlp18}. The GMA-optimal fraction, that is found at the last iteration, is an $OA(18,2\cdot3^3,2)$ with a wordlength pattern equal to $(0,0,0.5,1.5)$. 

\begin{table}
\caption{Wordlength patterns of optimal fractions; $m=4, n_1=2, n_2=\ldots=n_4=3, N=18$}
\label{tab:wlp18}       
\begin{tabular}{c|cccc}
\hline\noalign{\smallskip}
step & $\alpha_1(\fraction)$ & $\alpha_2(\fraction)$ & $\alpha_3(\fraction)$ & $\alpha_4(\fraction)$  \\
\noalign{\smallskip}\hline\noalign{\smallskip}
1 & 0 & 1.81 & 1.09 & 1.09 \\
2 & 0 & 0 & 1.78 & 0.22 \\
3 & 0 & 0 & 0.5 & 1.5 \\
4 & 0 & 0 & 0.5 & 1.5 \\
\noalign{\smallskip}\hline
\end{tabular}
\end{table}

\section{Conclusion} \label{sec:con}
The joint use of polynomial counting functions and quadratic optimization tools makes it possible to find GMA-optimal mixed-level orthogonal arrays of a given size. 
It is worth noting that the methodology does not put any restriction on the number of levels of each factor and so it can be applied to a very wide range of designs. The methodology works with the standard partition of the set of the monomial exponents, $L = \Z_{n_1} \times \Z_{n_m}$: main effects, $2$-factor interactions, ..., $m$-factor interaction but it can also be easily adapted to work with any partition of $L$. The range of applications is limited only by the amount of computational effort required. 

\begin{acknowledgements}
The author would like to thank Paolo Brandimarte (Politecnico di Torino), Giovanni Pistone (Collegio Carlo Alberto, Moncalieri, Torino) and Maria Piera Rogantin (Universit\`{a} di Genova) for the fruitful discussions he had with them.
\end{acknowledgements}


\begin{thebibliography}{18}
\providecommand{\natexlab}[1]{#1}
\providecommand{\url}[1]{{#1}}
\providecommand{\urlprefix}{URL }
\expandafter\ifx\csname urlstyle\endcsname\relax
  \providecommand{\doi}[1]{DOI~\discretionary{}{}{}#1}\else
  \providecommand{\doi}{DOI~\discretionary{}{}{}\begingroup
  \urlstyle{rm}\Url}\fi
\providecommand{\eprint}[2][]{\url{#2}}

\bibitem[{Bailey(2008)}]{bailey:08}
Bailey RA (2008) Design of comparative experiments. Cambridge Series in
  Statistical and Probabilistic Mathematics, Cambridge University Press,
  Cambridge, \doi{10.1017/CBO9780511611483},
  \urlprefix\url{http://dx.doi.org/10.1017/CBO9780511611483}

\bibitem[{Bose(1947)}]{bose:47}
Bose RC (1947) Mathematical theory of the symmetrical factorial design.
  Sankhy\=a 8:107--166

\bibitem[{Cheng and Ye(2004)}]{cheng2004geometric}
Cheng SW, Ye KQ (2004) Geometric isomorphism and minimum aberration for
  factorial designs with quantitative factors. Annals of Statistics pp
  2168--2185

\bibitem[{Collombier(1996)}]{collombier:96}
Collombier D (1996) Plans D'Exp\'erience Factoriels. Construction et
  propri\'et\'es des fractions de plans. No.~21 in Math\'ematiques et
  Applications, Springer, Paris

\bibitem[{Dey and Mukerjee(1999)}]{dey|mukerjee:99}
Dey A, Mukerjee R (1999) Fractional factorial plans. Wiley Series in
  Probability and Statistics: Probability and Statistics, John Wiley \& Sons
  Inc., New York, \doi{10.1002/9780470316986},
  \urlprefix\url{http://dx.doi.org/10.1002/9780470316986}, a Wiley-Interscience
  Publication

\bibitem[{Fontana(2013)}]{fontanacost}
Fontana R (2013) Algebraic generation of minimum size orthogonal fractional
  factorial designs: an approach based on integer linear programming.
  COMPUTATIONAL STATISTICS 28(1):241--253, \doi{10.1007/s00180-011-0296-7},
  \urlprefix\url{http://porto.polito.it/2464780/}

\bibitem[{Fontana and Pistone(2013)}]{fontanapistoneadv13}
Fontana R, Pistone G (2013) Algebraic generation of orthogonal fractional
  factorial designs. In: Advances in Theoretical and Applied Statistics /
  Torelli, Nicola; Pesarin, Fortunato; Bar-Hen, Avner, Springer, pp 61--71,
  \urlprefix\url{http://porto.polito.it/2509921/}

\bibitem[{Fontana and Samp\'{o}(2013)}]{fontanasampo}
Fontana R, Samp\'{o} S (2013) Minimum-size mixed-level orthogonal fractional
  factorial designs generation: A sas-based algorithm. Journal of Statistical
  Software 53(10):1--18, \urlprefix\url{http://www.jstatsoft.org/v53/i10}

\bibitem[{Fontana et~al(2000)Fontana, Pistone, and
  Rogantin}]{fontana|pistone|rogantin:2000}
Fontana R, Pistone G, Rogantin MP (2000) Classification of two-level factorial
  fractions. Journal of Statistical Planning and Inference 87(1):149--172

\bibitem[{Hedayat et~al(1999)Hedayat, Sloane, and
  Stufken}]{hedayat|sloane|stufken:1999}
Hedayat AS, Sloane NJA, Stufken J (1999) Orthogonal arrays. Theory and
  applications. Springer Series in Statistics, Springer-Verlag, New York

\bibitem[{{MOSEK ApS}(2014)}]{rmosek}
{MOSEK ApS} (2014) Users Guide to the R-to-MOSEK Optimization Interface

\bibitem[{Mukerjee and Wu(2006)}]{mukerjee|wu:2006}
Mukerjee R, Wu C (2006) A modern theory of factorial designs. Springer series
  in statistics, Springer,
  \urlprefix\url{http://books.google.com/books?id=p47Mv7QjDAQC}

\bibitem[{Pistone and Rogantin(2008)}]{pistone|rogantin:2008-JSPI-3}
Pistone G, Rogantin M (2008) Indicator function and complex coding for mixed
  fractional factorial designs. J Statist Plann Inference 138(3):787--802

\bibitem[{Pistone and Wynn(1996)}]{pistone|wynn:96}
Pistone G, Wynn HP (1996) Generalised confounding with {G}r\"obner bases.
  Biometrika 83(3):653--666

\bibitem[{Raktoe et~al(1981)Raktoe, Hedayat, and
  Federer}]{raktoe|hedayat|federer:81}
Raktoe BL, Hedayat A, Federer WT (1981) Factorial designs. John Wiley \& Sons
  Inc., New York, wiley Series in Probability and Mathematical Statistics

\bibitem[{{SAS Institute Inc.}(2008)}]{man1}
{SAS Institute Inc} (2008) SAS/IML 9.2 User's Guide. Cary, NC: SAS Institute
  Inc

\bibitem[{Schoen et~al(2010)Schoen, Eendebak, and
  Nguyen}]{schoen|eendebak|guyen:2010}
Schoen ED, Eendebak PT, Nguyen MVM (2010) Complete enumeration of pure-level
  and mixed-level orthogonal arrays. J Combin Des 18(2):123--140,
  \doi{10.1002/jcd.20236}, \urlprefix\url{http://dx.doi.org/10.1002/jcd.20236}

\bibitem[{Wu and Hamada(2000)}]{wu|hamada:2000}
Wu CFJ, Hamada M (2000) Experiments. Planning, analysis, and parameter design
  optimization, A Wiley-Interscience Publication. John Wiley \& Sons Inc., New
  York

\end{thebibliography}

%
%

\end{document}